\documentclass[11pt]{article}

\usepackage[a4paper, total={6in, 10in}]{geometry}

\usepackage{preamble/packages}
\usepackage{preamble/commands}
\usepackage{amsmath,amssymb,amsthm}
\usepackage{booktabs}
\usepackage{microtype}

\newtheorem{theorem}{Theorem}[section]
\newtheorem{proposition}[theorem]{Proposition}
\newtheorem{lemma}[theorem]{Lemma}

\theoremstyle{definition}

\theoremstyle{remark}
\newtheorem{remark}[theorem]{Remark}
\newtheorem{example}[theorem]{Example}

\newcommand{\GL}{\mathrm{GL}}
\newcommand{\SL}{\mathrm{SL}}
\newcommand{\SO}{\mathrm{SO}}
\newcommand{\Or}{\mathrm{O}}
\newcommand{\SU}{\mathrm{SU}}
\newcommand{\Sp}{\mathrm{Sp}}
\newcommand{\C}{\mathbb{C}}
\newcommand{\Z}{\mathbb{Z}}
\newcommand{\Q}{\mathbb{Q}}
\newcommand{\lam}{\lambda}

\hypersetup{
    colorlinks=false,       
    pdfborder={0 0 1},      
    linkbordercolor={1 0 0}, 
    citebordercolor={0 1 0}, 
    urlbordercolor={0 1 1}   
}
\begin{document}

\begin{titlepage}
   \begin{center}
       \vspace*{2cm}
        \Huge
       \textbf{Decomposing Grassmann Monomials for Superfield Expansions}
       
       \vspace{0.3cm}
         \LARGE
       \vspace{0.2cm}
     
\begin{center}

{\bf
Jesse Woods
} \\
\vspace{2mm}

{\small{\it 
	Albert Einstein Center for Fundamental Physics,\\
	Institute for Theoretical Physics, University of Bern,\\
	Sidlerstrasse 5, CH-3012 Bern, Switzerland}
}
\vspace{2mm}
~\\
\texttt{jesse.woods@unibe.ch}\\
\vspace{1mm}
\end{center}
   \end{center}

   \vspace{1cm}
   \begin{abstract}
\baselineskip=14pt
Superspace extends spacetime by anticommuting Grassmann coordinates, whose indices may transform under spin and flavour groups. Decomposing Grassmann monomials into independent invariant structures is a central step in constructing superfield expansions, but becomes increasingly difficult in extended superspace. We present a general representation-theoretic procedure for decomposing the space $\Lambda^n(\mathbb{C}^{d_S}\otimes \mathbb{C}^{d_F})$, where $d_S$ and $d_F$ are the dimensions of the spin and flavour representations. The branching multiplicities for
$\GL(m)\downarrow{\Sp(m)}$ and $\GL(m)\downarrow{\SO(m)}$ are computed by
exact Weyl-character comparison at deterministic rational sample points, avoiding Littlewood-Richardson modification rules. We
establish completeness of the character ansatz and analyse the computational complexity, showing that it is polynomial in the
number of partitions of $n$ and in the group rank, without dependence on
Littlewood-Richardson combinatorics. We validate the resulting decompositions via dimension checks. We provide an accompanying standalone Julia script which implements this algorithm for groups of arbitrary rank, and produces the explicit contractions with invariant tensors.
\end{abstract}
\newpage

\vspace{5mm}
\end{titlepage}

\tableofcontents

\section{Introduction}
Superspace has been a powerful tool for the study of supersymmetric field theory since its inception\cite{Salam:1976ib}. By extending ordinary spacetime $\mathcal{M}$ parametrised by $\{x_\mu\}_{\mu=0,...\dim{\mathcal{M}}-1}$ by a set of anticommuting coordinates $\{\theta\}$, one can encapsulate the field content of the theory in terms of superfields $\Phi(x,\theta)$ defined over this supermanifold. In extended supersymmetry, these Grassmann coordinates
$\theta^i_\alpha$ can carry both a spinor index $\alpha=1,\dots,d_S$ and an
internal (R-symmetry or flavour) index $i=1,\dots,d_F$.

Taylor expansions of superfields in powers of anticommuting coordinates terminate after a finite $N+1$ number of terms where $N=d_S\cdot d_F$ due to the nilpotency of the Grassmann numbers, 
\begin{equation}
    \Phi(x,\theta)=\sum_{n=0}^{N} C^{\alpha_1...\alpha_n}_{i_1...i_n}(x)\theta_{\alpha_1}^{i_1}...\theta_{\alpha_n}^{i_n},
\end{equation}
with each coefficient corresponding to an ordinary spacetime component field, thereby making the structure of supersymmetric multiplets explicit and facilitating practical calculations of supersymmetric actions, transformations, and interactions.

For a single Majorana or Weyl spinor $\theta_{\alpha}$, the expansion terminates quickly and is written by
hand, and the field content is transparent. For extended superspace, the expansion has $d_Sd_F+1$ terms, and the number of
independent components at order $n$ is $\binom{d_S d_F}{n}$, with the full expansion having $2^{d_Sd_F}$ components. These independent components must be identified and interpreted. The practical question is not merely to enumerate components but to organise them: if one can write each monomial as a sum of terms in which invariant tensors of the symmetry groups of the indices absorb as many indices as possible of the coefficient fields, then it is easier to identify the field content of the theory, even if the resulting superfield expansion is longer than $N+1$ terms. Such organised expansions are what one actually uses when constructing actions, imposing shortening conditions, or matching component multiplets. 

To illustrate this, consider the chiral superspace associated to $4D$ $\mathcal{N}=2$ supersymmetry, given by coordinates $(x_\mu,\theta_\alpha^i)$, where $\alpha=1,2$ is an $\SL(2,\C)$ spinor index and $i=1,2$ is an $\SU(2)_R$ flavour index. Each group has an associated invariant totally antisymmetric tensor $\epsilon_{\alpha\beta}$ and $\epsilon_{ij}$ respectively. In the superfield expansion, consider the $\mathcal{O}(\theta^2)$ piece:
\begin{equation}
    C^{\alpha\beta}_{ij}(x)\theta^i_\alpha\theta^j_\beta.
\end{equation}
Here, $C^{\alpha\beta}_{ij}$ has $6$ independent components. However, by defining the symmetric objects
\begin{equation}
    \theta_{\alpha\beta}=\epsilon_{ij}\theta_\alpha^i\theta_\beta^j, \quad \theta^{ij}=\epsilon^{\alpha\beta}\theta_\alpha^i\theta_\beta^j,
\end{equation}
one can decompose the term into
\begin{equation}
    C^{\alpha\beta}(x)\theta_{\alpha\beta}+ C_{ij}(x)\theta^{ij},
\end{equation}
where $C^{\alpha\beta}(x)$ and $C_{ij}(x)$ are symmetric fields, each with three independent components, corresponding to a self-dual 2-form singlet, and an $\SU(2)$ triplet scalar. Thus, by using this basis the field content is more transparent.
Although an irreducible representation decomposition provides a useful
classification of the component fields, it is not always the most useful information for explicit superspace calculations. Beyond these, what is more useful is an explicit
decomposition in terms of contractions with invariant tensors. This becomes particularly advantageous for higher extended supersymmetry, where the large R-symmetry groups generate increasingly complicated tensor products. Working directly with a list of irreducible representations can be hard to compute with, whereas an invariant-tensor basis retains manifest covariance while keeping the component fields in a form that is practical.

There is much literature on the decompositions of such monomials, both by hand and computationally. Early works established the problem and
solved it case by case, see \cite{Rittenberg:1981cp,Siegel:1981ec,Howe:1981xy,Koller:1982cs} for a by no means comprehensive list. There has also been recent interest with the
availability of computer algebra, directed additionally at the higher-dimensional cases \cite{JamesGates:2019bwu,JamesGates:2020hrl,Gates:2020qtw,Gates:2020gtu,Mansouri:2022qug}. We note also that similar representation theory algorithms to ours exist across a variety of packages, see for example \cite{van1992lie,Feger:2019tvk,Fonseca:2020vke}.

This paper describes an algorithm, and an accompanying native implementation in Julia, that produces this organised expansion automatically for $d_S$-dimensional spinor indices and dimension $d_F$ flavour indices in the fundamental representation, with
a choice of structure group on each index from
$\{\GL(m)$, $\mathrm{U}(m)$, $\SL(m)$, $\mathrm{SU}(m)$, $\Sp(m)$, $\Or(m)$, $\SO(m)\}$. Thanks to the exceptional isomorphisms, this is sufficient to describe spinors in six dimensions or lower. For $D>6$ spacetime dimensions, the product of spinors requires more than just a $\GL(m)$ decomposition. Recent works due to
Gates, Hu and Mak~\cite{JamesGates:2019bwu,JamesGates:2020hrl,Gates:2020qtw} and
Mansouri~\cite{Mansouri:2022qug}, treat from $4$- up to $11$-dimensional superspaces by methods involving branching from $\SU(d)$ and adinkras to tabulate
the Lorentz content of scalar superfields. These work with the spinor index
directly, branching to the Lorentz algebra through its spinor representation,
which lies outside the present scope of this work. A strand of that
work \cite{Gates:2020gtu} addresses the same fundamental-embedding branching
considered here, which we mention in Section~\ref{sec:witnesses}.

Beyond determining the multiplicities and irreducible representations, we implement an additional ``witness''-layer, which produces the required contractions with invariant tensors explicitly, which are  objects of interest to a physicist. This algorithm was designed for inclusion inside the larger Julia computer algebra package ``Alakazam'' which has other superspace functionality\cite{woods2026alakazam}.

The only representation-theoretic input the algorithm needs is the support of the branching, i.e. which labels can occur. Everything after that is exact linear algebra, which either returns the unique answer or fails visibly. There are three key properties of this design:
\begin{enumerate}
\item \emph{Exactness.}  All arithmetic is over $\Q$ with arbitrary-precision
integers; there are no floating point numbers in the core functions, and no randomness anywhere. Thus, numerical
error is excluded by construction and results are reproducible run to run.
\item \emph{Self-certification.}  Representation-theoretic bookkeeping is
notoriously easy to get subtly wrong (stability ranges, modification rules and folding,
doublings from self-dual and anti-self-dual representations).  Rather than trusting any single derivation, the
program re-verifies each branching it computes - exact reproduction of the
character at every sample point and at an independent one, integrality of all
multiplicities, and an overall dimension sum rule - and refuses to return a
result that fails any check.  This follows the philosophy of certifying
algorithms~\cite{McConnell2011}.
\item \emph{Scalability in the group size.}  The cost should depend on the rank, rather than the
dimension of the representation being decomposed; in particular no
Littlewood-Richardson coefficients are ever computed for the purpose of
determining a multiplicity in the decomposition.
\end{enumerate}

Section~\ref{sec:setup} fixes notation and states the problem formally.
Section~\ref{sec:cauchy} recalls the Cauchy decomposition underlying the
first layer of the algorithm.  Section~\ref{sec:branching} develops the
character-comparison branching, proves completeness of the candidate ansatz
(Proposition~\ref{prop:parity} isolates the odd-rank parity phenomenon), and
proves that a successful run returns the true branching
(Proposition~\ref{prop:certified}).  Section~\ref{sec:witnesses} constructs
explicit contraction witnesses, including the $\eps$ and $\omega^{\wedge k}$ strategies needed for
$\SO$ and $\Sp$.  Section~\ref{sec:complexity} analyses
complexity; Section~\ref{sec:validation} reports validation; and
Section~\ref{sec:limitations} discusses limitations.

\section{Statement of the Problem}
\label{sec:setup}

Let $V=\C^{d_S}$ and $W=\C^{d_F}$, and let $\theta^i_\alpha$
($\alpha=1,\dots,d_S$; $i=1,\dots,d_F$) be generators of the Grassmann algebra
$\Lambda(V\otimes W)$.  A superfield is an element
\begin{equation}
\Phi(x,\theta)\;=\;\sum_{n=0}^{d_S d_F}\;
C_{(n)}{}^{\alpha_1\cdots\alpha_n}_{\,i_1\cdots i_n}(x)\;
\theta^{i_1}_{\alpha_1}\cdots\theta^{i_n}_{\alpha_n},
\end{equation}
where Einstein summation notation is implied on repeated indices. Here, each coefficient may, without loss of generality, be taken totally
antisymmetric under simultaneous exchange of $(\alpha_k,i_k)$ pairs, so that
the order-$n$ coefficient space is $\Lambda^n(V\otimes W)^*$ of dimension
$\binom{d_S d_F}{n}$.

Fix structure groups $G_S\leq \GL(V)$ and $G_F\leq \GL(W)$ from the
list $\{\GL$, $\mathrm{U}$, $\SL$, $\mathrm{SU}$, $\Sp$, $\Or$, $\SO\}$.  By an \emph{independent structure} at order $n$ we mean a term
\begin{equation}
C\,\cdot\,T\,\cdot\,\theta^{i_1}_{\alpha_1}\cdots\theta^{i_n}_{\alpha_n},
\end{equation}
where $T$ is an explicit product of $G_S$- and $G_F$-invariant tensors
contracted against a subset of the $\theta$ indices (possibly sharing further
contracted indices with $C$), and $C$ is a tensor in a single irreducible
representation of $G_S\times G_F$ (with the exception of self-dual and anti-self-dual decomposition as we discuss in Lemma~\ref{lem:pair}), such that the components taken over all
structures span the order-$n$ coefficient space and their coefficient
dimensions add up to $\binom{d_S d_F}{n}$ exactly. The invariant tensors for the groups that we will consider can be seen in Table \ref{tab:invars}.
\begin{table}[H]
\centering
\begin{tabular}{c|c}
\hline
Group & Invariant tensors 
\\
\hline
$\mathrm{GL}(m)$
& $\delta^i{}_j$
\\

$\mathrm{U}(m)$
& $\delta^i{}_j$
\\

$\mathrm{SL}(m)$
& $\delta^i{}_j,\ \epsilon_{i_1\cdots i_m}$
\\

$\mathrm{SU}(m)$
& $\delta^i{}_j,\ \epsilon_{i_1\cdots i_m}$
\\

$\mathrm{Sp}(m)$
& $\omega_{ij}=-\omega_{ji}$, $\omega^{\wedge k}$, $k=1,...,m/2$
\\

$\mathrm{O}(m)$
& $\eta_{ij}$
\\

$\mathrm{SO}(m)$
& $\eta_{ij},\ \epsilon_{i_1\cdots i_m}$
\\
\hline
\end{tabular}
    \caption{Invariant tensors for common groups}
    \label{tab:invars}
\end{table}

Our focus is on the independent structures in terms of $\eps,$ $\eta,$ and $\omega$, depending on the group. $\delta$ will play no role here.
The algorithmic goal is to produce the list of independent structures, with
each $C$ of the smallest possible dimension, together with a machine-checked
proof of the dimension sum rule at every order.

\section{The Cauchy Decomposition}
\label{sec:cauchy}

The first layer is classical~\cite{FultonHarris,Macdonald,Weyl}: as a
$\GL(V)\times\GL(W)$ module,
\begin{equation}
\label{eq:cauchy}
\Lambda^n(V\otimes W)\;\cong\;\bigoplus_{\lambda}
\;\mathbb{S}_\lam(V)\otimes\mathbb{S}_{\lam'}(W),
\end{equation}
where $\lambda$ is an integer partition of $n$ with at most $d_S$ rows and at
most $d_F$ columns characterising a Young tableau, $\lam'$ denoting the conjugate partition corresponding to the transposed Young tableau, and
$\mathbb{S}_\lam$ the Schur functor which assigns the irreducible $\GL$ representation obtained by symmetrising the rows and antisymmetrising the columns of the associated tableau. 

For a simple example, consider $\Lambda^2(V\otimes W)$ with $V=\mathbb{C}^{2}$, $W=\mathbb{C}^{2}$ and the basis $\theta^i_\alpha\theta^j_\beta$. There are two possible $\lambda$ coming from the partitions of ``$2$'' given by $(2)$ and $(1,1)$. Then, the space decomposes as
\begin{equation}
    \Lambda^2(V\otimes W)
\simeq
\left(\underbrace{\mathrm{Sym}^2(V)}_{\lambda=(2)}\otimes\underbrace{\Lambda^2(W)}_{\lambda'=(1,1)}\right)
\oplus\left(
\underbrace{\Lambda^2(V)}_{\lambda=(1,1)}\otimes\underbrace{\mathrm{Sym}^2(W)}_{\lambda'=(2)}\right)
\end{equation}
which corresponds to the decomposition
\begin{equation}
    \theta^i_\alpha\theta^j_\beta=\theta^{[i}_{(\alpha}\theta^{j]}_{\beta)}+\theta^{(i}_{[\alpha}\theta^{j)}_{\beta]}.
\end{equation}

Generally, a convenient basis vector of each
summand is fixed by a pair of Young tableaux of shapes $\lam$ and $\lam'$
filled with the labels $1,\dots,n$ (one label per $\theta$ copy).

When $G_S=\mathrm{SL}(V)$ or $\mathrm{SU}(V)$, a column of full height $d_S$ in the spinor tableau corresponds to the invariant tensor
$\epsilon^{\alpha_1\cdots\alpha_{d_S}}$ and can therefore be contracted to a singlet. This allows full columns of height $d_S$ to be removed from the Young diagram. For $G_S=\mathrm{GL}(V)$ or $\mathrm{U}(V)$, the corresponding $\epsilon$-tensor transforms with the determinant and is not strictly an invariant, so no such reduction is possible. For $G\in\{\GL,\mathrm{U},\SL,\mathrm{SU}\}$, this is the only reduction available. The decomposition of $\mathrm{U}$ and $\mathrm{SU}$ reduces to the cases of $\GL$ and $\SL$ respectively.

For $G\in\{\Sp,\Or,\SO\}$ the summands of~\eqref{eq:cauchy} are reducible as \emph{$G$-representations} and
the second layer of the algorithm decomposes them.

\section{Branching by Exact Character Comparison}
\label{sec:branching}
Branching describes how a representation of a bigger group splits into representations of a smaller group. Given a representation $\rho$, the \emph{character} $\chi$ of the representation is the trace
\begin{equation}
    \chi_\rho(g)=\Tr \rho(g).
\end{equation}
Since the trace is a homomorphism, this satisfies
\begin{equation}
\chi_{\rho_1\oplus\rho_2}=\chi_{\rho_1}+\chi_{\rho_2}
\end{equation}
Thus, if we expect a branching, its character is the sum of the
characters of its constituents weighted by their multiplicities. This turns
the decomposition into a linear problem, where the multiplicities are the unknowns. Since the characters are given in closed form by the Weyl character formula, none of the representation spaces, their projectors, or any basis ever has to be constructed.

\subsection{The Ansatz and Completeness}
We now describe how the restriction of a $\GL(m)$ representation to
$G=\Sp(m)$ or $\SO(m)$ is determined in practice.
Let us take $m=d_S$ or $d_F$, and set
\begin{equation}
    r=\left\lfloor\frac{m}{2}\right\rfloor.
\end{equation}
Irreducible representations of $\Sp(m)$ and $\SO(m)$ are labelled by
partitions $\mu$ with $\ell(\mu)\le r$, where $\ell(\mu)$ is the length of the partition $\mu=(\mu_1,...\mu_{\ell(\mu)})$. We denote the corresponding
characters by $\chi^G_\mu$.

For $\SO(2r)$, a label with $\ell(\mu)=r$ and $\mu_r>0$ corresponds
to two irreducible representations, conventionally denoted
$\mu_+$ and $\mu_-$, which are exchanged by the outer automorphism of $\SO(2r)$\cite{FultonHarris}. We can write
\begin{equation}
\chi^G_\mu=\chi_{\mu_+}+\chi_{\mu_-}
\end{equation}
in this case, and similarly count the combined dimension of the two
representations. This convention is natural for the restrictions considered
here because both representations always occur with the same multiplicity:

\begin{lemma}
\label{lem:pair}
In the restriction of any finite-dimensional $\GL(2r)$ representation to $\SO(2r)$, the
irreducible representations $\mu_+$ and $\mu_-$ occur with equal multiplicity.
\end{lemma}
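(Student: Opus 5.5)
The plan is to exploit the fact that $\SO(2r)$ sits inside $\Or(2r)\le\GL(2r)$. Any element $g\in\Or(2r)\setminus\SO(2r)$ therefore acts on a $\GL(2r)$-module and conjugates $\SO(2r)$ into itself. A convenient choice is a reflection $g=\mathrm{diag}(1,\dots,1,-1)$. Conjugation by $g$ realises the outer automorphism $\sigma$ of $\SO(2r)$, which exchanges $\mu_+$ and $\mu_-$ whenever $\mu_r>0$.

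Let $U$ be a finite-dimensional $\GL(2r)$-representation, with the action written $\rho$. For $h\in\SO(2r)$ we have
\[
\rho(g)\rho(h)\rho(g)^{-1}=\rho(ghg^{-1})=\rho(\sigma(h)).
\]
Hence the linear isomorphism $\rho(g):U\to U$ intertwines the $\SO(2r)$-module $U$ with its $\sigma$-twist $U^\sigma$, where $h$ acts by $\rho(\sigma(h))$. So $U|_{\SO(2r)}\cong U|_{\SO(2r)}^\sigma$.

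Next, decompose $U|_{\SO(2r)}=\bigoplus_\nu m_\nu\,V_\nu$. Twisting by $\sigma$ permutes the isotypic components, and $V_{\mu_\pm}^\sigma\cong V_{\mu_\mp}$. Comparing multiplicities on both sides then gives $m_{\mu_+}=m_{\mu_-}$. This uses complete reducibility and the uniqueness of multiplicities, both standard for the compact form $\SO(2r,\mathbb R)$ or the reductive group over $\C$.

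The only step needing care is the claim that $\sigma=\mathrm{Ad}(g)$ genuinely swaps $\mu_+$ and $\mu_-$. I would check this with characters. On the maximal torus with eigenvalues $x_1^{\pm1},\dots,x_r^{\pm1}$, conjugation by the reflection sends $x_r\mapsto x_r^{-1}$, and so maps the highest weight $(\mu_1,\dots,\mu_{r-1},\mu_r)$ to $(\mu_1,\dots,\mu_{r-1},-\mu_r)$. Equivalently, it exchanges the two Weyl character formula numerators. These numerators differ by the sign of the $\prod_i(x_i^{\,\cdot}-x_i^{-\cdot})$-type determinant term, which is exactly the term that flips under $x_r\to x_r^{-1}$.

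A slicker route is also available. The character of $U$ is a symmetric polynomial in all $2r$ eigenvalues $x_1^{\pm1},\dots,x_r^{\pm1}$, so it is invariant under $x_r\leftrightarrow x_r^{-1}$. Meanwhile $\chi_{\mu_+}-\chi_{\mu_-}$ is anti-invariant under that substitution. Linear independence of irreducible characters then forces the coefficients of $\chi_{\mu_+}$ and $\chi_{\mu_-}$ to agree. I expect this verification of the swap to be the main, though mild, obstacle.
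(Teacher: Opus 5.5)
Your proof is correct and follows the paper's own proof. Conjugation by an improper orthogonal element is realised inside the $\GL(2r)$-module by $\rho(g)$. This makes $U|_{\SO(2r)}$ isomorphic to its twist, the twist exchanges $\mu_+$ and $\mu_-$, and so $m_{\mu_+}=m_{\mu_-}$. Your torus and Weyl-character check, $x_r\mapsto x_r^{-1}$ flipping $\mu_r\mapsto-\mu_r$, justifies the swap that the paper only asserts.
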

\begin{proof}
Suppose the restriction contains $a\mu_+\oplus b\mu_-$.
Conjugation by any improper orthogonal transformation $g_0\in\Or(2r)$ is an outer automorphism of
$\SO(2r)$
\[g\mapsto g_0gg_0^{-1}\]
interchanging $\mu_\pm$. Since the original representation is a $\GL(2r)$ representation, conjugation gives an equivalent representation. After restricting to $\SO(2r)$, the conjugated
representation therefore has the same decomposition as the original one,
but with $\mu_+$ and $\mu_-$ exchanged. Hence
\[
a\mu_+\oplus b\mu_-
\cong
b\mu_+\oplus a\mu_-,
\]
which implies
\[
a=b.
\]
\end{proof}
The lemma fixes the convention for the two associated representations of
$\SO(2r)$, corresponding to self-dual and anti-self-dual pairs. 

We now turn to the actual branching problem. Given a
$\GL(m)$ representation $\mathbb S_\lambda(\mathbb C^m)$, which
$\SO(m)$ or $\Sp(m)$ representations can occur when it is restricted to
the subgroup?

The standard answer is provided by the Littlewood restriction rules\cite{Littlewood,Koike1987}. These express the
multiplicities of $\mu$ in the restriction $\mathbb{S}_\lam(\C^m)|_G$ explicitly as a sum of
Littlewood-Richardson coefficients $c^{\lam}_{\mu\delta}$ (which are the multiplicities of the corresponding representation $\mathbb{S}_\lambda$ in the tensor product $\mathbb{S}_\mu\otimes \mathbb S_\delta$), over partitions
$\delta$ with all rows even (orthogonal case) or all columns even
(symplectic case), provided $\ell(\lam)\le r$, which is the ``stable range''. 

Outside of these cases, the naive sums must be corrected by King's modification
rules\cite{King1975}. These ``fold'' inadmissible labels back to admissible ones. For example, a candidate label $\nu$ whose length $\ell(\nu)$ exceeds the rank, has representations isomorphic to those of label $\Tilde{\nu}$, obtained by replacing the first column with one of height $m-\ell(\nu)$. These differ by a contraction with the $\eps$ tensor. Symplectic labels behave similarly, using wedge powers of $\omega$.

Our algorithmic implementation sidesteps these rules, and determines the
multiplicities as the unique solution of a linear system of character values. Folding enters below only as a proof device: Proposition~\ref{prop:parity} uses
it to bound which labels can occur, but the decomposition algorithm never performs a fold. For this approach, however, we still need to know in advance which irreducible representations can occur. The next
proposition gives precisely this candidate set.

\begin{proposition}[Completeness of the candidate set]
\label{prop:parity}
Let $\lam$ be a partition of $n$ with $\ell(\lam)\le m$, and let $c_\mu$ denote the
multiplicity of $\mu$ in the restriction of $\mathbb{S}_\lam(\C^m)$ to
$G$.  Then $c_\mu=0$ unless $\ell(\mu)\le r$ and
\begin{equation}
n-|\mu|\;\in\;
\begin{cases}
2\Z_{\ge0}, & G=\Sp(2r)\ \text{or}\ G=\SO(2r),\\[2pt]
\Z_{\ge0}, & G=\SO(2r+1).
\end{cases}
\end{equation}
For $G=\SO(2r+1)$, both even and odd values of $n-|\mu|$ can occur.
\end{proposition}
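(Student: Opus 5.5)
The plan is to obtain the two constraints, on the length $\ell(\mu)$ and on the size $|\mu|$ of $\mu$, separately. The condition $\ell(\mu)\le r$ comes directly from the labelling convention. The irreducible representations of $\Sp(2r)$ and $\SO(m)$ are indexed by partitions with at most $r$ parts. For $\SO(2r)$ the pair $\mu_\pm$ is collapsed into the single label $\mu$, and Lemma~\ref{lem:pair} shows this causes no ambiguity. So every constituent of the restriction automatically carries a label with $\ell(\mu)\le r$, and the content of the proposition is the constraint on $n-|\mu|$.

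For the parity statement in the cases $G=\Sp(2r)$ and $G=\SO(2r)$, I would use the central element $-I\in G$. Since $\mathbb S_\lam(\C^m)\subset(\C^m)^{\otimes n}$, the element $-I$ acts on it as the scalar $(-1)^n$. On the other hand, the irreducible representation of $G$ labelled $\mu$ (or $\mu_\pm$) has highest weight $\sum_i\mu_i\epsilon_i$. Because $-I$ is central, it acts on this representation by a scalar, which can be read off on the highest weight vector as $(-1)^{|\mu|}$. Comparing the two scalars on each isotypic component forces $(-1)^{n}=(-1)^{|\mu|}$ whenever $c_\mu\neq0$. For $\SO(2r+1)$ this step is unavailable, since $-I\notin\SO(2r+1)$. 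To show that odd differences genuinely occur, I would exhibit $\lam=(1^{m})$ with $m=2r+1$: then $\mathbb S_\lam=\det$ restricts to the trivial representation, so $\mu=\emptyset$ occurs with $n-|\mu|=2r+1$ odd. A smaller example is $\Lambda^{r+1}\C^{2r+1}\cong\Lambda^{r}\C^{2r+1}$, which gives $n-|\mu|=1$.

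The remaining claim is nonnegativity, $|\mu|\le n$, and I expect it to be the main obstacle, because it is exactly where the modification rules enter as a proof device. I would embed $\mathbb S_\lam(\C^m)$ into $(\C^m)^{\otimes n}$ and invoke the classical Weyl--Brauer decomposition for the full isometry group $\Or(m)$, respectively $\Sp(2r)$. This writes the tensor power as a sum of traceless parts of $(\C^m)^{\otimes(n-2k)}$, obtained by iterated contraction with $\eta$ or $\omega$. In turn, the traceless part of rank $p$ decomposes into irreducibles labelled by partitions $\nu$ with $|\nu|=p$: for $\Sp$ these satisfy $\ell(\nu)\le r$, and for $\Or$ they satisfy $\nu'_1+\nu'_2\le m$. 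Hence every label appearing for $\Sp(2r)$, or for $\Or(m)$, has size $n-2k\le n$.

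Passing from $\Or(m)$ to $\SO(m)$, an $\Or(m)$ label $\nu$ with $\ell(\nu)\le r$ restricts to $\mu=\nu$, with the $\mu_\pm$ split covered by Lemma~\ref{lem:pair}. A label with $\ell(\nu)>r$ instead restricts to $\tilde\nu$, obtained by replacing the first column of height $\ell(\nu)$ with one of height $m-\ell(\nu)<\ell(\nu)$; this is the $\eps$-contraction fold described above. Since this fold strictly decreases the size, $|\mu|\le|\nu|\le n$. For $m=2r$ the size changes by $2\ell(\nu)-2r$, which is even, consistent with the parity argument. For $m=2r+1$ it changes by $2\ell(\nu)-m$, which is odd, and this is the source of the odd differences found above. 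The care required in this step is to justify the folding identification of $\Or(m)$ irreducibles restricted to $\SO(m)$ with a determinant twist, which I would cite from King's modification rules and Fulton--Harris rather than rederive.
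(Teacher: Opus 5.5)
Your proof is correct, but it follows a partly different route from the paper's.

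The paper gets both constraints from one pictorial mechanism. Contraction with $\eta$ or $\omega$ strips boxes in pairs, so $|\mu|=n-2k$ before folding. The later fold of an over-long label then shifts the size by an even amount when $m$ is even, but possibly by an odd amount when $m$ is odd, as $\Lambda^2\C^3\cong\C^3$ shows.

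Your non-negativity step is essentially a rigorous version of that picture: Weyl's decomposition of traceless tensors, followed by the associate identification $E_\nu\cong E_{\tilde\nu}\otimes\det$ when restricting from $\Or(m)$ to $\SO(m)$. It makes explicit two points the paper leaves implicit. First, in your route no fold is needed for $\Sp(2r)$ at all, because the symplectic traceless decomposition already produces only labels with $\ell(\nu)\le r$ and $|\nu|=n-2k$. Second, the $\SO$ fold strictly lowers the size by $2\ell(\nu)-m>0$, and this is what actually guarantees $n-|\mu|\ge 0$. The paper only says the fold changes the box count by some amount of a given parity.

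Your parity step is genuinely different. The central element $-I$ acts by $(-1)^n$ on $\mathbb{S}_\lam(\C^m)$ and by $(-1)^{|\mu|}$ on every constituent $\mu$ or $\mu_\pm$. By Schur's lemma this settles the even-parity constraint for $\Sp(2r)$ and $\SO(2r)$ without knowing which labels occur or how folding works. It also pinpoints the odd-rank exception as exactly the failure of $-I\in\SO(2r+1)$.

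What each route buys:
\begin{itemize}
\item The paper's version ties the parity phenomenon directly to the $\eta$/$\omega$ pair witnesses and the $\epsilon$ witnesses used later in the algorithm.
\item Yours stands as a self-contained proof once Weyl's traceless-tensor theorem and the $\Or(m)$ associate rule are cited.
\item Your examples $\lam=(1^{2r+1})$ and $\Lambda^{r+1}\C^{2r+1}\cong\Lambda^{r}\C^{2r+1}$ extend the paper's $r=1$ demonstration of odd differences to every rank.
\end{itemize}
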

\begin{proof}
The key observation is that the invariant bilinear form $\eta_{ij}$ or $\omega_{ij}$ allows pairs of
indices to be contracted.
Contraction with the form removes boxes in pairs from the Young tableau, so before anything else we have
\[
|\mu|=n-2k
\]
for some $k\geq0$. For $\Sp(2r)$ and $\SO(2r)$, the subsequent folding of labels that are too
long changes the number of boxes only by an even amount. So, the parity is preserved and 
\[n-|\mu|\in 2\mathbb{Z}_{\geq0}.\]
For $\SO(2r+1)$, folding a label can change the number of boxes by an odd amount as the dimension $2r+1$ is odd, meaning either parity is possible. For example, when $r=1$, $m=3$, the naive label is $\nu=(1,1)$, but this has column height $\ell(\nu)=2$, which is bigger than $r=1$. Thus, we fold to a column of height $m-\ell(\nu)=1$, giving the resulting folded label $\mu=(1)$. This is the familiar isomorphism between cross products and vectors in $3D$,
\[
\Lambda^2\mathbb C^3\simeq\mathbb C^3,
\]
for which $n=2$ and $|\mu|=1$, so that
\[
n-|\mu|=1
\]
is odd. This explicitly demonstrates that odd values must be included in the
candidate set.

\end{proof}
This completeness statement is important computationally. If the candidate
set is incomplete, the exact character equations need not have a solution.
Thus the character system itself provides a useful consistency check: an
omitted representation manifests as an inconsistency rather than
producing an incorrect decomposition.

\subsection{Determination and Run-time Certification}

Let us fix the candidate list $M=\{\mu\}$ of Proposition~\ref{prop:parity}
(cardinality $|M|\le\sum_{j\le n}p(j)$, $p$ the partition function). We would like to determine how many times $c_\mu$ each candidate appears,
\begin{equation}\label{eq:problem}
\mathbb{S}_\lam(\C^m)|_G=\bigoplus_{\mu\in M}c_\mu \rho_\mu.
\end{equation}
This amounts to solving the equation
\begin{equation}
    \chi_\lambda(g)=\sum_{\mu\in M}c_\mu \chi_\mu(g).
\end{equation}
This is a linear equation for each group element, and to solve for $c_\mu$ we would thus need at least $|M|$ sample points. 
We choose
$s=|M|+8$ deterministic sample points
$x^{(1)},\dots,x^{(s)}\in\Q_{>0}^{\,r}$, built from consecutive primes so
that no two coordinates coincide, to be on the safe side. For each $x^{(i)}$, we can construct a diagonal group element given by
\begin{equation}
    g(x^{(i)})= \begin{pmatrix}
x^{(i)}_1 &     &        &    &  \\
    & (x^{(i)}_1)^{-1} &        & &    \\
    &     & \ddots &     &\\
    &     &        & x^{(i)}_r & \\
    & & & & (x^{(i)}_r)^{-1}
\end{pmatrix},
\end{equation}
which lives in the maximal torus of the complexified group. We are free to evaluate the characters on the complexified group because finite-dimensional representations of the compact group extend to its complexification, and their characters therefore extend as algebraic functions of the complexified torus variables. This lets us evaluate the character functions at rational values rather than the usual complex-phase eigenvalues of the compact groups. For $\SO(2r+1)$, we include an additional eigenvalue $1$ and parametrise $x(y)=y^2$ to avoid half-integer exponents in the character formulas, so we can continue working with rational numbers.

For each torus point $x^{(i)}=(x_1^{(i)},...,x_r^{(i)})$, we evaluate the left-hand side $\GL(m)$ character of \eqref{eq:problem} using the Schur polynomial on the corresponding $m$ eigenvalues
\begin{equation}
\begin{cases}
     \chi_\lambda(g)=s_\lambda(x_1^{(i)},(x_1^{(i)})^{-1},...,x_r^{(i)},(x_r^{(i)})^{-1}), \quad m=2r \\
    \chi_\lambda(g)=s_\lambda(x_1^{(i)},(x_1^{(i)})^{-1},...,x_r^{(i)},(x_r^{(i)})^{-1},1), \quad m=2r+1.
\end{cases}    
\end{equation}
The Schur polynomial encodes the character of the original $\GL(m)$ representation, and is given for arbitrary $\lambda$ by the Weyl character formula
\begin{equation}
    s_\lambda(z_1,...z_m)=\frac{\det\left[\left(z_j^{\lambda_i+m-i}\right)^m_{i,j=1}\right]}{\det \left[\left(z_j^{m-i}\right)_{i,j=1}^m\right]}
\end{equation}
All characters and the Schur polynomial
$s_\lam$ are evaluated as exact ratios of determinants over
$\Q$ using Bareiss fraction-free elimination~\cite{Bareiss1968} which runs in $\mathcal{O}(m^3)$ time. Degenerate
points, where a Weyl denominator vanishes, raise an exception and are
replaced by fresh primes.

The program then solves the overdetermined linear system
$\sum_\mu c_\mu\,\chi_\mu(x^{(i)})=s_\lam(x^{(i)})$, $i=1,\dots,s$ the standard way by solving the normal equations:

\begin{equation}
    A^TAc=A^Tb
\end{equation}
where 
\begin{equation}
    A_{i\mu}=\chi_\mu(x^{(i)}),\quad b_i=s_\lam(x^{(i)}), \quad c_\mu\text{ unknown.}
\end{equation}
Our code then verifies: (i) the solution reproduces
the right-hand side exactly at every sample point and at one further,
independent point; (ii) every $c_\mu$ is an integer (checked through the
exact rational's denominator, not a tolerance); (iii)
$\sum_\mu c_\mu\dim_G(\mu)=\dim\mathbb{S}_\lam(\C^m)$, where $\dim_G$ is the dimension of the irreducible $G$ representation labelled by $\mu$, computed using the Weyl-dimension formula (with the doubling convention of
Lemma~\ref{lem:pair}), given in the Appendix \ref{sec:appendixa}.
Note that even though it seems like our system is overconstrained from the $s=|M|+8$ equations for $|M|$ multiplicities, our additional equations are not independent constraints, and are redundancies for certification. For a correct decomposition, the character identity holds exactly, so we expect a solution with zero residual from the least-squares solver. Failure of any check aborts the computation with an error.

\begin{proposition}[A posteriori correctness]
\label{prop:certified}
If the procedure returns, its output is the true branching multiplicity
vector.
\end{proposition}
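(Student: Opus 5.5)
The argument compares the returned vector with the true multiplicity vector and shows the difference is zero using linear independence of irreducible characters. By complete reducibility of the restriction (finite-dimensional representations of $\Sp(m)$, $\SO(m)$, or of their compact forms), the true branching $\mathbb S_\lam(\C^m)|_G=\bigoplus_\mu c^{\mathrm{true}}_\mu\rho_\mu$ exists and is unique. By Proposition~\ref{prop:parity}, every $\mu$ with $c^{\mathrm{true}}_\mu\neq0$ lies in the candidate list $M$. For $\SO(2r)$, Lemma~\ref{lem:pair} lets us use the combined characters $\chi^G_\mu=\chi_{\mu_+}+\chi_{\mu_-}$ without loss, since $\mu_\pm$ always occur together. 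Hence the vector $c^{\mathrm{true}}$, indexed by $M$, satisfies the character identity $\chi_\lam=\sum_{\mu\in M}c^{\mathrm{true}}_\mu\chi^G_\mu$ identically on the (complexified) torus, and in particular at every sample point.

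Now let $c$ be the returned vector. Check (i) gives $\sum_\mu c_\mu\chi^G_\mu(x^{(i)})=s_\lam(x^{(i)})$ exactly at every sample point. Subtracting, the difference $d=c-c^{\mathrm{true}}$ satisfies $Ad=0$, where $A_{i\mu}=\chi^G_\mu(x^{(i)})$. The core step is to show $A$ has full column rank $|M|$, so that $d=0$. I would argue this as follows.

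\begin{itemize}
\item The functions $\chi^G_\mu$, $\mu\in M$, are linearly independent Laurent polynomials in $x_1,\dots,x_r$ (in $y$ for odd $m$). Their leading monomials $x^\mu$ are distinct in dominance order, and for $\SO(2r)$ the paired sum still has a distinct leading term.
\item Exact rational arithmetic then makes the rank of $A$ a purely deterministic fact about the chosen points. Solving the normal equations $A^TAc=A^Tb$ over $\Q$ yields a unique $c$ only if $A^TA$ is nonsingular, i.e.\ only if $A$ has full column rank.
\end{itemize}

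So the procedure must be read as aborting when $A^TA$ is singular, and success of the exact solve certifies injectivity of $A$ on $\Q^{|M|}$. This gives $d=0$, i.e.\ $c=c^{\mathrm{true}}$.

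The main obstacle is precisely that rank condition. Linear independence of the characters as functions does not by itself guarantee independence at finitely many chosen rational points. I would not try to prove that the prime-based points are generic. Instead I would make the certificate explicit: the returned $c$ is the unique solution because the exact elimination found a pivot in every column. If the implementation instead returns some least-squares solution when $A$ is rank deficient, the proposition would need that extra hypothesis. In that case I would add it, noting that integrality (ii) and the dimension rule (iii) are then only necessary consistency checks, not sufficient ones.

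Checks (ii), (iii) and the extra independent point are then redundant for correctness but consistent with it. Their role is to detect an incomplete candidate set or an implementation bug, which would manifest as an inconsistent system rather than a wrong answer.
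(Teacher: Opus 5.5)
Your proposal is correct and follows the paper's proof: the true multiplicities are supported on $M$ by Proposition~\ref{prop:parity}, so they satisfy the sampled system. Success of the exact normal-equations solve means $A^TA$ is nonsingular, hence $A$ has full column rank, and check (i) then forces the returned vector to be the unique solution. Your bullet on linear independence of the characters is not needed (as you note, the certificate alone carries the argument), while your explicit appeal to Lemma~\ref{lem:pair} for the paired $\SO(2r)$ labels spells out a step the paper leaves implicit.
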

\begin{proof}
By Proposition~\ref{prop:parity} the true multiplicities $c^{*}$ satisfy
$s_\lam|_G=\sum_{\mu\in M}c^{*}_\mu\chi_\mu$ identically, hence in
particular at the sample points.  Success of the exact normal-equations
solve means $A^TA$ is nonsingular, i.e.\ the sample matrix
$A_{t\mu}=\chi_\mu(x^{(t)})$ has full column rank; the sampled system
therefore has at most one solution.  The returned vector satisfies it by
check (i), and $c^{*}$ satisfies it identically; hence they coincide.
\end{proof}

\begin{remark}
Checks (i)-(iii) are logically redundant given full column rank, but they
are retained deliberately: they certify the implementation, not only
the mathematics, giving immediate failures rather than data corruption, as expected from a certification algorithm\cite{McConnell2011}.
\end{remark}

The coefficient dimensions in check (iii) calculate the dimension of each $G$- representation using its own $G$-representation formula, not the dimension of the $\GL$ Young diagram, which is generally larger. For example, for the $\lambda=(2,2)$ diagram,
\begin{equation}
    \dim_{\GL(4)}\mathbb{S}_{(2,2)}(\mathbb{C}^4)=20,
\end{equation}
but for the highest weight module $V_{(2,2)}$ of $\Sp(4)$
\begin{equation}
    \dim_{\Sp(4)}V_{(2,2)}=14,
\end{equation}
as $\Sp(4)$ has additional relations coming from the symplectic form.

\section{Explicit Witness Contractions with Invariant Tensors}
\label{sec:witnesses}

The multiplicities certified above answer ``how many'' of each irreducible representation exist in $\mathbb{S}_\lam(\C^m)|_G$.  For the superfield expansion, one also wants, for each copy, an explicit contraction with invariant tensors: a ``witness'' that realises a copy of a certified irreducible representation. The
witness layer supplies one wherever an elementary witness exists, and
reports the remainder honestly as composite, unresolved witnesses.  Crucially, the layer is sound by construction: it emits at most the certified multiplicity of
witnesses per branch, and dimension bookkeeping depends only on the certified
data, so no failure or over-eagerness of the search can affect any count. We group these into three types of witnesses.

\paragraph{Pair witnesses.}
For a branch $\lam\to\mu$ removing $2j=|\lambda|-|\mu|$ boxes by pairings, this corresponds to the removal of $2j$ free indices from the coefficient field. An elementary witness is an explicit contraction of the $2j$ indices into a 
choice of $j$ disjoint index pairs, using the antisymmetric symplectic form $\omega$ ($\Sp$) or symmetric metric
$\eta$ ($\Or/\SO$), chosen to be compatible with the Young diagram symmetrisation. To do this, we consider the skew diagram $\lambda/\mu$ consisting of the removed boxes of the Young tableau. We then perform a Littlewood-Richardson filling of the skew diagram, in which entries are weakly increasing across rows, strictly down columns, with a right-to-left counting ``lattice rule'' which ensures that when reading, the number of $k$s do not exceed the number of $k-1$s.

For example, we could have the skew tableau $\lambda/\mu$ from
\[
\lambda=
\begin{ytableau}
\;&\;&\;&\;\\
\;&\;&\;\\
\;&\;
\end{ytableau}
\qquad
\mu=
\begin{ytableau}
\;&\;\\
\;
\end{ytableau}
\qquad\Longrightarrow\qquad
\lambda/\mu=
\begin{ytableau}
\none & \none & \;&\;\\
\none & \;&\;\\
\;&\;
\end{ytableau},
\]
which removes six boxes, so that $j=3$. We could then consider the filling
\[
\lambda/\mu=\begin{ytableau}
\none & \none & 1&1\\
\none & 2&2\\
3&3
\end{ytableau}
\]
Its rows are weakly increasing, both of its columns increase strictly, and its right-to-left, top-to-bottom reading word is $(1,1,2,2,3,3)$. At every stage, the number of $2$s is less than or equal to the number of $1$s, and the number of $3$s is less than or equal to the number of $2$s, so the lattice condition is satisfied.

This is a bookkeeping rule which provides a candidate contraction pattern that is compatible with the branching $\lambda\to \mu$. The labels in the filling encode how the removed boxes form pairings. This comes from the Littlewood-Richardson theorem, which says that the number of valid fillings of $\lambda/\mu$ with content $\nu$ is the Littlewood-Richardson coefficient $c^\lambda_{\mu\nu}$. Here, the content refers to the tuple $\nu=(\nu_1,\nu_2,\dots)$ where $\nu_k$ is the number of cells carrying label $k$, for instance $\nu=(2,2,2)$ in the above example. Thus, the filling rule dictates the representation-theoretically allowed couplings, which are possible candidates from contractions with invariant tensors.

For the orthogonal group generically the content is $(2,\dots, 2)$. Each pair is assigned a common label so that two boxes with the same label are contracted using the metric $\eta$. Thus, the filling provides an explicit prescription for index contractions. 

For the symplectic case, with $2j$ removed boxes, the content is always of the form $(j,j)$, labeling $j$ boxes with $1$ and $j$ boxes with $2$. Each pair corresponds to one contraction with the symplectic form $\omega$. This would correspond to an alternate filling of the above tableau, with content $(3,3)$
\[\lambda/\mu=\begin{ytableau}
\none & \none & 1&1\\
\none & 1&2\\
2&2
\end{ytableau}\]
Equivalently, labels can be represented with a balanced sequence of opening and closing brackets, where the opening/closing specifies the order of the indices, which is important as the form is antisymmetric. The above filling can also be written as the balanced bracket string ``$(()())$'', obtained from the lattice word $112122$, where ``$1$'' denotes an opening and ``$2$'' a closing bracket. 

Note that the filling of the tableau is dependent on the group under consideration, which determines the content. For explicit comparison, writing $a,\dots,f$ for the six removed cells, we can read off the implied invariant tensors
\[
\begin{ytableau}
\none & \none & a&b\\
\none & c&d\\
e&f
\end{ytableau}
\qquad
\text{symplectic:}\ \ \omega_{ad}\,\omega_{cf}\,\omega_{be}
\qquad
\text{orthogonal:}\ \ \eta_{ab}\,\eta_{cd}\,\eta_{ef}.
\]

\paragraph{$\eps$ witnesses.}
Besides pairing indices with $\omega$ or $\eta$, for $\SL$, $\Sp$ and $\SO$, one may also contract with the totally antisymmetric volume-form, the Levi-Civita tensor $\epsilon$.
This is invariant for the special groups, however only up to sign for $\Or(m)$; accordingly
the following strategies are enabled for $\SO$ and disabled for $\Or$, while $\Sp$ gets a more in-depth treatment.

A column of full height $m$ is contracted with one $\eps$; for
$\Sp(2r)$ this $\eps$ is proportional to $\omega^{\wedge r}$. 
More generally, a partial contraction of $\eps$ may consume a column of height $h<m$, leaving $f=m-h$ mutually antisymmetric free slots on the epsilon, which then attach to the coefficient field, for example
\begin{equation}
    C^{\alpha_1,...\alpha_f}\epsilon_{\alpha_1...\alpha_f\beta_{f+1}...\beta_m}\theta^{\beta_{f+1}}...\theta^{\beta_{m}}.
\end{equation}
Because the remaining indices are completely antisymmetric, the Pieri rule says that they must be added to the Young diagram of the coefficient field in different rows, as they transform under $\Lambda^f V$. This is the explicit contraction underlying the odd-parity case of Proposition~\ref{prop:parity}: for example,
$\Lambda^2\C^3|_{\SO(3)}$ yields
$C^{k}\,\eps_{kij}\,\theta^i\theta^j$, the cross product, which is an independent structure that would not be captured by $\eta$ contractions.

\paragraph{Symplectic $\omega^{\wedge k}$ witnesses.}
For $\Sp(m)$, $m=2r$, the pair and $\epsilon$ witnesses above are the two
extremes of a single family. Because $\omega$ is antisymmetric, its wedge
powers
\begin{equation}
(\omega^{\wedge k})_{i_1\cdots i_{2k}}
\;\propto\;
\omega_{[i_1i_2}\cdots\omega_{i_{2k-1}i_{2k}]},
\qquad 2k\le m,
\end{equation}
are invariant totally antisymmetric tensors of rank $2k$, whose components are
the $2k\times 2k$ Pfaffians of $\omega$. The extreme cases are
$\omega^{\wedge1}=\omega$, the pair witness, and $\omega^{\wedge r}\propto\epsilon$,
the volume form. For $2k<m$ the tensor $\omega^{\wedge k}$ is \emph{not} a
partial contraction of $\epsilon$; it is an independent invariant of its own
rank, and thus must be tracked separately.

A factor $\omega^{\wedge k}$ contracts to something nonvanishing only when its
$2k$ slots are themselves antisymmetrised, so it has to act inside a single
column of the tableau. A strategy therefore assigns to each column $c$ of
$\lambda$, of height $h_c$, an even number $x_c\le\min(h_c,m)$ of cells, taken
from the bottom of that column and contracted with one $\omega^{\wedge x_c/2}$, subject to $\sum_c x_c\le n-|\mu|$ and that the leftover heights $h_c-x_c$ are
weakly decreasing in $c$. Since $x_c$ is even, a column of odd height is never consumed entirely, and the largest admissible strip leaves one cell standing. The weakly-decreasing condition ensures that the
unconsumed cells constitute a genuine Young diagram, whose columns are the top segments of the original columns. The pair search above then applies to what remains, pairing the $2j=n-\sum_c x_c-|\mu|$ residual boxes by a Littlewood-Richardson filling as before.

The strips here are necessary. A pair filling can place at
most one $\omega$ inside any single column. The content $(j,j)$ offers only the
two labels $1$ and $2$, and entries increase strictly down columns, so a column
of $\lambda/\mu$ can hold at most two cells, which is room for the two slots of one
$\omega$, and no more. A factor $\omega^{\wedge k}$ with $k\ge2$ acting inside a
single column would need $2k\ge4$ cells of that column, so no filling can
produce it. The contraction is nonetheless nonzero: those $2k$ slots lie in one
column and are therefore antisymmetrised, which is precisely the condition
under which $\omega^{\wedge k}$ survives. These are the Pfaffian-type invariants and a strip $x_c=2k$ is the only route by which the search can reach them.

The strategies are ordered so that pure pair witnesses are tried first, then
those using fewer and larger strips, and the enumeration is truncated at the
certified multiplicity. As with the other witnesses, the search emits at most $c_\mu$ terms per branch and doesn't influence the dimension count. Strips and pairs together are still not exhaustive. For $\lambda=(3,2,2,1)$
under $\Sp(4)$ the certified copy of $\mu=(1,1)$, of dimension $5$, admits
neither and is still reported as composite.

We note that for the same branching, Gates, Hu and Mak~\cite{Gates:2020gtu} propose graphical ``tying rules'', similar to what we have presented here, as an alternative to Littlewood's Schur-function series, conjectured to be equivalent to Littlewood's, which they explicitly verify for tableaux of at most three columns. The character method here is verifiably correct a posteriori (Proposition~\ref{prop:certified}) for any shape and rank, including for $\Sp$.

\section{Complexity}
\label{sec:complexity}
Here we consider the runtime complexity of our algorithm.
\begin{proposition}
\label{prop:complexity}
Consider a candidate list of cardinality $|M|$.
For one Cauchy block $\mathbb{S}_\lam(V)\otimes\mathbb{S}_{\lam'}(W)$ of degree $|\lambda|=n$ at rank $r$, the branching computation
performs $O\!\big(|M|^2\big)$ character and $O\!\big(|M|\big)$ Schur evaluations. Each is a
determinant of size at most $m$, costing $O(m^{3})$ ring operations on
rationals of bit length $O\!\big((n+m)\log P\big)$ (Bareiss intermediates grow this by a further factor of $m$), where
$P=O\!\big(|M|\,r\log(|M|\,r)\big)$ bounds the primes used.
The exact linear algebra adds $O(|M|^3)$ rational operations. Finally
\begin{equation}
\label{eq:Mbound}
|M|\;=\;
\begin{cases}
\displaystyle\sum_{\substack{j\le n\\ n-j\ \mathrm{even}}} p_{\le r}(j),
  & m=2r,\\[14pt]
\displaystyle\sum_{j\le n} p_{\le r}(j),
  & m=2r+1,
\end{cases}
\qquad\text{and in either case}\quad |M|\le(n{+}1)\,p(n),
\end{equation}
with $p(k)$ the number of partitions
of $k$. The total cost is thus polynomial in $p(n)$ and $r$, and depends on
$m$ only through $r$ and the parity of $m$.
\end{proposition}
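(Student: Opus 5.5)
The plan is to walk through the algorithm of Section~\ref{sec:branching} step by step, counting the work at each stage, and then bound $|M|$ separately using Proposition~\ref{prop:parity}.

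\emph{Operation counts.} The algorithm uses $s=|M|+8$ sample points, plus one independent verification point. At each point it evaluates the $|M|$ candidate characters $\chi_\mu$ and one Schur polynomial $s_\lambda$. This gives $(|M|+9)|M|$ character evaluations, which is $O(|M|^2)$, and $|M|+9=O(|M|)$ Schur evaluations. Each evaluation is a ratio of two determinants of size at most $m$: the Weyl numerator and denominator for $\chi_\mu$ have size $r$, and those for $s_\lambda$ have size $m$. Bareiss elimination computes each in $O(m^3)$ ring operations. For the linear algebra, forming $A^TA$ and $A^Tb$ from the $s\times|M|$ matrix $A$ costs $O(s|M|^2)=O(|M|^3)$. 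Solving the resulting $|M|\times|M|$ system exactly is another $O(|M|^3)$. The checks (i)--(iii) add only $O(|M|^2)$ work, plus the cheap Weyl-dimension formula evaluations.

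\emph{Bit lengths.} We need $s\cdot r=O(|M|r)$ distinct primes to build the sample points. By the prime number theorem, the $N$-th prime is $O(N\log N)$, which gives the stated bound $P=O(|M|r\log(|M|r))$. The matrix entries in the Weyl numerators are powers $z_j^{\lambda_i+m-i}$ with exponents at most $n+m$. The $x_j^{-1}$ entries are handled by first clearing denominators with a monomial factor. Hence each entry has bit length $O((n+m)\log P)$. Bareiss's theorem bounds each intermediate by a minor of the matrix, which multiplies this bit length by at most a factor of $m$. For $\SO(2r+1)$, the substitution $x=y^2$ only doubles the exponents, so it does not change the $O$-bound.

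\emph{Bounding $|M|$.} By Proposition~\ref{prop:parity}, the candidate set consists of the partitions $\mu$ with $|\mu|=j\le n$ and $\ell(\mu)\le r$. For $m$ even we additionally require $n-j$ to be even. Counting partitions of each size $j$ into at most $r$ parts gives exactly the two sums in \eqref{eq:Mbound}. For the uniform bound, we use $p_{\le r}(j)\le p(j)\le p(n)$, since $p$ is monotone for $j\le n$. There are at most $n+1$ values of $j$, so $|M|\le (n+1)p(n)$.

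Substituting this into the operation counts gives a cost polynomial in $n$, $p(n)$ and $m$. Since $m\le 2r+1$, every dependence on $m$ is really a dependence on $r$ and the parity of $m$. The crude factor $n+1$ is polynomial in $p(n)$, because $p(n)\ge n/2$ for large $n$, so it is absorbed.

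The main obstacle is making the bit-length accounting rigorous. Two points need care. First, replacing degenerate points by fresh primes could in principle enlarge $P$; I would argue this happens only boundedly often, because the Weyl denominators vanish only when coordinates coincide or equal $\pm1$, which distinct primes avoid. Second, the size of the rational numbers appearing in the exact normal-equation solve depends on $A^TA$. I would bound this via Hadamard's inequality applied to $A^TA$, and accept that this contributes a polynomial factor rather than the clean figures quoted in the statement.
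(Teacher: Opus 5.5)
Your proposal is correct and follows essentially the accounting the paper relies on; the paper gives no separate proof, only the discussion after the statement. You count $|M|+O(1)$ sample points times $|M|$ characters, bound entry sizes by prime powers with exponents $O(n+m)$ and primes of size $O(|M|r\log(|M|r))$, and count the candidate labels of Proposition~\ref{prop:parity} via $p_{\le r}(j)$ with the parity restriction at even $m$. Your additional remarks go slightly beyond what the paper states and are sound: distinct primes $>1$ never make a Weyl denominator vanish, and the bit growth in the exact normal-equation solve is only polynomial.
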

The bound is polynomial in $p(n)$ rather than in $n$. Since $p(n)$ grows
subexponentially this is not a polynomial-time algorithm in the degree. This is not as bad as it seems practically, since $n\le d_Sd_F$ and the partition numbers stay relatively small in that range, eg $p(8)=22$, $p(16)=231$. The parity dependence is crucial here. At even $m$ the candidate list
runs over $n,n-2,\dots$, whereas at odd $m$ Proposition~\ref{prop:parity}
admits both parities and $|M|$ roughly doubles. Thus, the odd parity cases are measurably slower.

The salient point is what the bound does not contain: no
Littlewood-Richardson coefficient is ever computed to obtain a
multiplicity, no modification rule is applied, and no tensor space $(\mathbb{C}^m)^{\otimes n}$ of
dimension $m^n$ is materialised. Fillings appear only in the witness layer of Section~\ref{sec:witnesses}, where they propose explicit contractions. 

It should be said plainly that this is not a speed advantage in most cases of interest. In the stable
range $\ell(\lam)\le r$, where Littlewood's rule applies directly, evaluating it
is far cheaper than the character solve. On the small shapes considered here, the
classical route returns the same result much faster than character comparison. The cost comes from  the fact that Littlewood's rule evaluates the multiplicities directly, whereas we must invert an overdetermined system which is exacerbated by the cost of rational arithmetic, since the sample entries are powers $p^{e}$ with $e$ growing as $n+m$, so their bit lengths grow as $(n+m)\log P$. We note that computing the Littlewood-Richardson coefficient is a \#P-complete\cite{narayanan2005computationkostkanumberslittlewoodrichardson} problem regardless, though it is not an issue at the small degrees considered here.

What the method buys instead is uniformity and verifiability. A single code path
covers $\Sp(2r)$, $\SO(2r)$ and $\SO(2r+1)$, both inside and outside the stable
range, with no case analysis, no signed cancellations, and no modification-rule
calculus to implement correctly. The answer is also self-certifying, as an incomplete candidate set or implementation mistake would yield an inconsistent linear system, rather than a plausible but incorrect answer. For a tool whose output is intended to be used without further checking we regard that as a worthwhile trade, but it is a trade. 


The witness layer performs at worst $p(j)$ backtracking searches whose worst case is exponential in $n$. Since $n\le d_S d_F$ is small in applications and the search is truncated at the certified multiplicity, it is negligible in practice, and, by the
soundness property of Section~\ref{sec:witnesses}, it cannot affect
correctness under any circumstances.

Note that the prime number sample points are deterministic, and are fixed by $|M|$ and $r$ rather than sampled randomly.  This means that every run of every check
is exactly reproducible, and a reported failure is a permanent, shareable
artefact.

To verify that runtime of the decomposition does not scale with the dimension of the representation, we consider the branching to $\SO(10)$ corresponding to rank $r=5$. For fixed degree $n=4$, presented in Table \ref{tab:dimscaling}, we see timings barely vary, and in fact are faster for the largest dimensional case than the smallest dimensional one when tested on our machine. 
\begin{table}[ht]
\centering
\begin{tabular}{lrr}
\toprule
$\lam$ & $\dim\mathbb{S}_\lam(\C^{10})$ & time (s)\\
\midrule
$(1,1,1,1)$ &  210 & 0.064\\
$(4)$       &  715 & 0.055\\
$(2,2)$     &  825 & 0.053\\
$(2,1,1)$   &  990 & 0.068\\
$(3,1)$     & 1485 & 0.056\\
\bottomrule
\end{tabular}
\caption{Branching to $\SO(10)$ at fixed degree $n=4$ with  $|M|=8$ fixed. Timings are medians over 20 runs.}
\label{tab:dimscaling}
\end{table}
This provides a contrast to branching by projecting with weights which requires enumeration or explicit construction of projectors. The character method here never forms a weight or a basis
vector.

\section{Validation and Examples}
\label{sec:validation}

Three complementary layers of validation were applied.

\emph{(a) Internal certification.} We check the calculation from inside the program. Every branching in every run is
certified as in Section~\ref{sec:branching}, checking integer-valueness, dimensionality, etc. The top-level renderer
additionally checks, at each Grassmann order $n$, the sum rule
$\sum_{\text{structures}}\dim C_S\cdot\dim C_F=\binom{d_S d_F}{n}$, which counts the expected number of components. This is a necessary condition for a consistent decomposition.

\emph{(b) Cross-implementation sweep.}  The branching algorithm was ported
independently to exact-fraction arithmetic (meaning we should see an exact match, not modulo machine precision) in a second language (Python) and swept
over all shapes with $n\le6$ and all $m\in\{2,\dots,8\}$ (185 shape/rank
pairs), with every internal check passing, and with the even-parity-only
ansatz reproducibly failing at odd $m$ exactly as
Proposition~\ref{prop:parity} predicts.

\emph{(c) Known answers.}  Table~\ref{tab:known} lists hand-checkable cases,
all reproduced.

\begin{table}[ht]
\centering
\begin{tabular}{llll}
\toprule
$\lam$ & $G$ & result & comment\\
\midrule
$(1,1)$ & $\SO(3)$ & $(1)$ & $\Lambda^2\C^3\cong\C^3$ via $\eps_{ijk}$\\
$(1,1,1)$ & $\SO(3)$ & $\varnothing$ & $\Lambda^3\C^3\cong\C$ via $\eps_{ijk}$\\
$(2,1)$ & $\SO(3)$ & $(2)\oplus(1)$ & $8=5\oplus3$, $\mathrm{Sym}_0^2(\mathbb{C}^3)\oplus \mathbb{C}^3$\\
$(2,2)$ & $\SO(3)$ & $(2)\oplus\varnothing$ & $6=5\oplus1$, $\mathrm{Sym}^2_0(\mathbb C^3)\oplus\mathbb C$\\
$(1,1,1)$ & $\SO(4)$ & $(1)$ & $\Lambda^3\mathbb{C}^4\cong \mathbb{C}^4$ via $\epsilon_{ijkl}$\\
$(1,1,1)$ & $\SO(5)$ & $(1,1)$ & $\Lambda^3\C^5\cong\Lambda^2\C^5$, using $\epsilon_{ijklm}$\\
$(2)$ & $\SO(4)$ & $(2)\oplus\varnothing$ & $10=9\oplus1$, $\mathrm{Sym}^2_0(\mathbb C^4)\oplus\mathbb C$\\
$(1^4)$ & $\Sp(4)$ & $\varnothing$ & Volume form $\eps\propto\omega\wedge\omega$\\
\bottomrule
\end{tabular}
\caption{Hand-verifiable branchings reproduced by the algorithm.  $\varnothing$
denotes the trivial representation.}
\label{tab:known}
\end{table}

We consider now some simple examples of interest: 
\begin{example}[$\SL(2)\times\SO(3)$]
For $d_S=2$, $d_F=3$ the order-$2$ component coefficient field is contracted with a term of the form
\begin{equation}
    C_{i_1i_2}^{\alpha_1\alpha_2}\theta^{i_1}_{\alpha_1}\theta^{i_2}_{\alpha_2},
\end{equation}
where $\alpha_k=1,2$ are $\SL(2)$ spinor indices, and $i_k=1,2,3$ are $\SO(3)$ flavour indices. The total number of components at this order is $\binom{6}{2}=15$. This decomposes as
\begin{equation}
    C^{(\alpha_1\alpha_2)k}
\epsilon_{k i_1i_2}
\theta^{i_1}_{\alpha_1}\theta^{i_2}_{\alpha_2}
\qquad
3\times3=9,
\end{equation}
\begin{equation}
C_{(i_1i_2)}  \epsilon^{\alpha_1\alpha_2}
\theta^{i_1}_{\alpha_1}\theta^{i_2}_{\alpha_2},
\qquad
1\times5=5 \text{ (traceless)},
\end{equation}
and
\begin{equation}
    C\epsilon^{\alpha_1\alpha_2}
\eta_{i_1i_2}
\theta^{i_1}_{\alpha_1}\theta^{i_2}_{\alpha_2},
\qquad
1\times 1=1.
\end{equation}
The dimensions
sum to $15$ as certified. The first term is the $\eps$ contraction of the odd-parity case of
Proposition~\ref{prop:parity}, associated to the $\SO(3)$ branching $\Lambda^2\mathbb C^3\cong\mathbb C^3$ explicitly.
\end{example}

\begin{example}[$6D$ $\mathcal{N}=2$ Koller Decomposition]
Koller in \cite{Koller:1982cs} catalogued by hand the complete algebra of spinorial
derivatives on $D=6$, $\mathcal{N}=2$ superspace. He uses these as projectors to determine the independent structures. To convert from the notation of \cite{Koller:1982cs} to that of this paper, the
spinor index in $\mathrm{SU}^*(4)$ with indices $A,B,C,...$ corresponds to the real form of our $G_S=\SL$ with $d_S=4$ and indices $\alpha_k$, and the isospin index of $\mathrm{USp}(2)$ with indices $a,b,c,...$ is captured by $G_F=\Sp$ and $d_F=2$ with indices $i_k$ in our notation.

His Section~3.1 derives the first layer of our algorithm by hand by applying the
Young projector of a tableau to the $\mathrm{SU}^*(4)$ indices, which is the Cauchy
decomposition~\eqref{eq:cauchy}. His requirement that the two tableaux
carry at most $d_S=4$ and $d_F=2$ boxes per column is the box condition of
Section~\ref{sec:cauchy}. What he then performs case by case (raising and
lowering with $\epsilon$ for both groups, and removing $\mathrm{USp}$ traces) is our second layer. Running the algorithm at each order reproduces the catalogue of his eq. (5) in full, seen in Table \ref{tab:koller}. 

\begin{table}[ht]
\centering
$
\begin{array}{c|l|c|l}
n & \SL(4)\times\Sp(2) \text{ labels with [dimension]} & \textstyle\binom{8}{n} & \text{Koller~(5)}\\
\hline
0 & \varnothing\times\varnothing\ [1] & 1 & 1\\
1 & (1)\times(1)\ [8] & 8 & D^{(1)}\\
2 & (2)\times\varnothing\ [10],\ (1,1)\times(2)\ [18] & 28 & D^{(2)}\\
3 & (2,1)\times(1)\ [40],\ (1,1,1)\times(3)\ [16] & 56 & D^{(3)}\\
4 & (2,2)\times\varnothing\ [20],\ (2,1,1)\times(2)\ [45],\ \varnothing\times(4)\ [5] & 70 & D^{(4)}\\
5 & (2,2,1)\times(1)\ [40],\ (1)\times(3)\ [16] & 56 & D^{(5)}\\
6 & (2,2,2)\times\varnothing\ [10],\ (1,1)\times(2)\ [18] & 28 & D^{(6)}\\
7 & (1,1,1)\times(1)\ [8] & 8 & D^{(7)}\\
8 & \varnothing\times\varnothing\ [1] & 1 & D^{(8)}
\end{array}$
\caption{$6D$ $\mathcal{N}=2$ decomposition.}
\label{tab:koller}
\end{table}

Every order closes against $\binom{8}{n}$, and the orders total $256=2^{d_Sd_F}$.
To illustrate further, at $n=4$ the algorithm returns three structures, in one-to-one correspondence
with Koller's level-four operators:
\begin{equation}
\begin{array}{lcll}
\Cyd{\yd{2,2}}{}^{\alpha_1\alpha_2\alpha_3\alpha_4}\,\epsilon_{i_1i_2}\epsilon_{i_3i_4}
  &\longleftrightarrow& D^{(4)}_{AB}{}^{CD}, & 20\times1=20,\\[4pt]
\Cyd{\yd{2,1,1}}{\yd{2}}{}^{\alpha_1\alpha_2\alpha_3\alpha_4}_{i_3i_4}\,\epsilon_{i_1i_2}
  &\longleftrightarrow& D^{(4)}_{ab\,A}{}^{B}, & 15\times3=45,\\[4pt]
\Cyd{}{\yd{4}}{}_{i_1i_2i_3i_4}\,\epsilon^{\alpha_1\alpha_2\alpha_3\alpha_4}
  &\longleftrightarrow& D^{(4)}_{abcd}, & 1\times5=5,
\end{array}
\end{equation}
so the level closes at $20+45+5=70=\binom{8}{4}$. We annotate here our coefficients with the Young tableaux of the remaining indices for clarity, filled left to right in index order. Each of Koller's operators pick out exactly one of these structures. For example, acting on a superfield and evaluating at $\theta=0$, $D^{(4)}_{abcd}$ returns the coefficient $C_{(i_1i_2i_3i_4)}$, the contraction with $\epsilon^{\alpha_1\alpha_2\alpha_3\alpha_4}$ implicit in
his definition of the operator.
\end{example}

\begin{example}[3D $\mathcal{N}=4$: $\SL(2)\times\SO(4)$]
\label{ex:3dN4}
Three-dimensional $\mathcal{N}$-extended superspace has $\theta^I_\alpha$ with
$\alpha=1,2$ a Majorana index of $\SL(2,\mathbb{R})$ and $I=1,\dots,\mathcal{N}$
the vector index of $\SO(\mathcal{N})_R$, so that $d_S=2$ with $G_S=\SL$ and
$d_F=\mathcal{N}$ with $G_F=\SO$. At $\mathcal{N}=4$ the full expansion carries
$2^{8}=256$ components. Contracting each term against
$\theta^{i_1}_{\alpha_1}\cdots\theta^{i_4}_{\alpha_4}$, the order-$4$
coefficient space of dimension $\binom{8}{4}=70$ decomposes as
\begin{equation}
\begin{array}{lcll}
\Cyd{\yd{4}}{}{}^{\alpha_1\alpha_2\alpha_3\alpha_4}\,\epsilon_{i_1i_2i_3i_4}
   &\qquad 5\times1=5,\\
\Cyd{\yd{2}}{\yd{1,1}}{}^{\alpha_2\alpha_3}_{i_1i_2}\,
   \epsilon^{\alpha_1\alpha_4}\eta_{i_3i_4}
   &\qquad 3\times6=18,\\
\Cyd{\yd{2}}{\yd{2}}{}^{\alpha_2\alpha_3}_{i_4m_1}\,
   \epsilon^{\alpha_1\alpha_4}\epsilon_{i_1i_2i_3m_1}
   &\qquad 3\times9=27,\\
C\epsilon^{\alpha_1\alpha_3}\epsilon^{\alpha_2\alpha_4}\eta_{i_1i_3}\eta_{i_2i_4}
   &\qquad 1\times1=1,\\
\Cyd{}{\yd{2,2}}{}_{i_1i_3i_2i_4}\,
   \epsilon^{\alpha_1\alpha_3}\epsilon^{\alpha_2\alpha_4}
   &\qquad 1\times10=10,\\
\Cyd{}{\yd{2}}{}_{i_1i_3}\,
   \epsilon^{\alpha_1\alpha_3}\epsilon^{\alpha_2\alpha_4}\eta_{i_2i_4}
   &\qquad 1\times9=9,
\end{array}
\end{equation}
summing to $70$ as certified. Two features appear here that the above examples do not exhibit. The third term uses a partial $\epsilon$ contraction, its unused slot $m_1$ attaching to the coefficient as in Section~\ref{sec:witnesses}. Two coefficients carry $\SO(4)$ labels of full length: $[i_1i_2]$ of dimension $6$ is contained in the second piece, and
$[(i_1i_3)(i_2i_4)]$ of dimension $10$ in the fifth piece. These each are the pairs $\mu_+\oplus\mu_-$ of
Lemma~\ref{lem:pair}, being $3+3$ and $5+5$ respectively corresponding to the parts transforming under each copy of $\SU(2)$ in $\SO(4)\cong (\SU(2)\times \SU(2))/\mathbb{Z}_2$.
\end{example}

\section{Limitations and Outlook}
\label{sec:limitations}

Three limitations are inherent to the current design of the $\GL$ branching and witness layer and are exposed,
rather than hidden, by the implementation.  (1) Characters are sampled on
the identity component, so the ``$\Or$'' option computes $\SO$ content. Two distinct $\Or$ representations can become identical when restricted to $\SO$: two representations may agree on $\SO(m)$ but
differ by the determinant character on the disconnected component. Genuine
$\Or(m)$ decomposition would require sampling reflections, which is a
straightforward extension of the same framework.  (2) At even dimension, the two
irreducibles of a length-$r$ label are treated as one object of doubled
dimension $\mu_+\oplus\mu_-$ (Lemma~\ref{lem:pair}); separating the self-dual $\mu_+$ and anti-self-dual $\mu_-$
coefficients would be a possible refinement.  (3) Not every branch has an explicit elementary witness. If an elementary witness cannot be found using $\omega^{\wedge k}$, $\eta$, and $\epsilon$ for a branch copy, the copy is reported as composite with certified shape and
dimension but no explicit contraction. Extending the witness grammar could produce expressions for some of these. However, the multiplicities will be the same regardless, as the witness layer is only a supplement to the branching calculation.

More broadly, the architecture of an exact, deterministic computation,
wrapped in run-time proofs of its own output, can convert both mathematical gaps and
implementation faults into immediate, diagnosable failures.  We suggest it as a template for representation-theoretic software in physics generally. 

We note that further optimisations to the speed of this code are possible using modular arithmetic to alleviate some of the burden of the rational arithmetic. Early tests suggest many orders-of-magnitude improvements for large degrees. This will be explored and certified in a future version.

Our procedure is limited in the sense that it only applies to spinors in six dimensions or lower. This is because generically, the spin groups do not admit isomorphisms to the classical Lie groups. An important direction for future development would be to extend this approach to the full spin groups, allowing for applications to the study of superspace in 10 and 11D.

We note also that in many cases Grassmann coordinates come in complex conjugate pairs $(\theta,\bar\theta)$. We here consider only the decomposition of monomials in $\theta$, however, in monomials consisting of both $\theta$ and $\bar\theta$, there may be additional reductions depending on the group. This is enhanced by the fact that spinors can have non-trivial properties under complex conjugation. Furthermore, in some parametrisations of superspace, the Grassmann numbers can carry even more indices. We do not account for this possibility here, and rather assume the indices transform under the fundamental representation. These provide some possible avenues for extension of the current script.

\paragraph{Code availability.}
The implementation is a self-contained Julia module (approximately a thousand lines, no dependencies beyond the standard library) available on Gitlab \href{https://gitlab.com/B0bGary/grassmann-monomials}{here}. It can also be found included in the Julia package ``\href{https://gitlab.com/B0bGary/alakazam.jl}{Alakazam}''\cite{woods2026alakazam}, where it is supplemented by explicit functions to generate superfield expansions.

\section*{Acknowledgements}
The author would like to thank Gabriele Tartaglino-Mazzucchelli for useful comments, especially pertaining to existing literature.
The author is supported by the Swiss National Science Foundation under grant number 200021\_219267. The author would like to thank the Ramsay Centre for Western Civilisation for its support.

\appendix\section{Weyl Dimension Formulas}\label{sec:appendixa}
For an irreducible representation with highest-weight label
\(\mu=(\mu_1,\ldots,\mu_r)\), the Weyl dimension formula gives its
dimension directly from the highest weight.  In the conventions used here,
the relevant formulas are:

For \(G=\Sp(2r)\),
\begin{equation}
\label{eq:weyl-dim-sp}
\dim_{\Sp(2r)}(\mu)
=
\prod_{i=1}^r
\frac{\mu_i+r-i+1}{r-i+1}
\prod_{1\leq i<j\leq r}
\frac{
(\mu_i-\mu_j+j-i)
(\mu_i+\mu_j+2r-i-j+2)
}{
(j-i)(2r-i-j+2)
}.
\end{equation}

For \(G=\SO(2r+1)\),
\begin{equation}
\label{eq:weyl-dim-so-odd}
\dim_{\SO(2r+1)}(\mu)
=
\prod_{i=1}^r
\frac{\mu_i+r-i+\frac12}{r-i+\frac12}
\prod_{1\leq i<j\leq r}
\frac{
(\mu_i-\mu_j+j-i)
(\mu_i+\mu_j+2r-i-j+1)
}{
(j-i)(2r-i-j+1)
}.
\end{equation}

For \(G=\SO(2r)\),
\begin{equation}
\label{eq:weyl-dim-so-even}
\dim_{\SO(2r)}(\mu)
=
\prod_{1\leq i<j\leq r}
\frac{
(\mu_i-\mu_j+j-i)
(\mu_i+\mu_j+2r-i-j)
}{
(j-i)(2r-i-j)
}.
\end{equation}

Equation~\eqref{eq:weyl-dim-so-even} is the dimension of a single irreducible
representation. By the convention fixed in Lemma~\ref{lem:pair}, a label with
$\ell(\mu)=r$ and $\mu_r>0$ denotes throughout the \emph{pair} $\mu_\pm$, so
the dimension entering the sum rule of check~(iii) is
\begin{equation}
\label{eq:weyl-dim-so-even-pair}
\dim_{\SO(2r)}\!\big(\mu_+\oplus\mu_-\big)
=2\,\dim_{\SO(2r)}(\mu),
\qquad \ell(\mu)=r,\ \mu_r>0,
\end{equation}
and $\dim_{\SO(2r)}(\mu)$ as given by~\eqref{eq:weyl-dim-so-even} in every
other case. For example, $\mu=(1,1)$ at $r=2$
gives $\dim_{\SO(4)}(\mu)=3$, hence $6$ for the pair, which is
$\dim\Lambda^2\C^4$ as required.

\newpage
\printbibliography
\end{document}